\newtheorem{rem}[theorem]{\textbf{Remark}}
\def\hil{{\mathcal H}}
\def\B{{\mathcal B}}
\def\M{{\mathcal M}}
\def\N{{\mathcal N}}
\def\S{{\mathcal S}}
\def\X{{\mathcal X}}
\def\Y{{\mathcal Y}}
\def\bz{\left(}
\def\jz{\right)}
\def\ki{\textit}
\def\kii{\textsl}
\def\kiii{}
\def\half{\frac{1}{2}}
\def\ep{\varepsilon}
\def\bN{\mathbb{N}}
\def\bR{\mathbb{R}}
\def\prod{\mathrm{prod}}
\def\old{^{\mathrm{(old)}}}
\def\nw{^{\mathrm{(new)}}}
\def\inv{^{-1}}
\newcommand{\norm}[1]{\left\| #1\right\|}
\newcommand{\rsr}[3]{D_{#3}\bz #1\|#2\jz}
\newcommand{\rsro}[3]{D^{\mathrm{(old)}}_{#3}\bz #1\|#2\jz}
\newcommand{\rsrn}[3]{D^{\mathrm{(new)}}_{#3}\bz #1\|#2\jz}
\newcommand{\ds}{\mbox{ }\mbox{ }}
\DeclareMathOperator{\Tr}{Tr}
\DeclareMathOperator{\supp}{supp}
\title{Convexity properties of the quantum R\'enyi divergences, with applications to the quantum Stein's lemma 
\footnote{This work was partially supported by the European Research Council Advanced Grant ``IRQUAT''.}}
\titlerunning{} 
\author[1,2]{Mil\'an Mosonyi}
\affil[1]{F\'{\i}sica Te\`{o}rica: Informaci\'{o} i Fenomens Qu\`{a}ntics,
Universitat Aut\`{o}noma de Barcelona, ES-08193 Bellaterra (Barcelona), Spain}
\affil[2]{Mathematical Institute, Budapest University of Technology and Economics, \\
Egry J\'ozsef u~1., Budapest, 1111 Hungary.}
\authorrunning{M.~Mosonyi} 
\subjclass{E.4 Coding and information theory, H.1.1. Information theory}
\keywords{Quantum R\'enyi divergences, Stein's lemma, composite null-hypothesis, second-order asymptotics}
\begin{document}

\maketitle

\begin{abstract}
We show finite-size bounds on the deviation of the optimal type II error from its asymptotic value in the quantum hypothesis testing problem of Stein's lemma with composite null-hypothesis. 
The proof is based on some simple properties of a new 
notion of quantum R\'enyi divergence, recently introduced in 
[M\"uller-Lennert, Dupuis, Szehr, Fehr and Tomamichel, 
J.~Math.~Phys. \textbf{54}, 122203, (2013)], and [Wilde, Winter, Yang, arXiv:1306.1586].
\end{abstract}

\section{Introduction}

R\'enyi defined the $\alpha$-divergence \cite{Renyi} of two probability distributions $p,q$ on a finite set $\X$ as
\begin{align*}
D_{\alpha}(p\|q):=\frac{1}{\alpha-1}\log\sum_{x\in\X}p(x)^{\alpha}q(x)^{1-\alpha},
\end{align*}
where $\alpha\in(0,+\infty)\setminus\{1\}$. 
These divergences have various desirable mathematical properties; they are strictly positive, non-increasing under 
stochastic maps, and 
jointly convex for $\alpha\in(0,1)$ and jointly quasi-convex for $\alpha>1$. 
For fixed $p$ and $q$,
$D_{\alpha}(p\|q)$ is a monotone increasing function of $\alpha$, and the 
limit $\alpha\to 1$ yields the relative entropy (a.k.a.~Kullback-Leibler divergence), probably the
single most important quantity in information theory.
Even more importantly, the R\'enyi divergences have great operational significance, as quantifiers of the trade-off between the 
relevant operational quantities in many information theoretic tasks, including hypothesis testing, source compression, and 
information transmission through noisy channels \cite{Csiszar}.
A direct operational interpretation of 
the R\'enyi divergences as generalized cutoff rates has been shown in \cite{Csiszar}.

In the view of the above, it is natural to look for an extension of the R\'enyi divergences for pairs of quantum states. 
One such extension has been known in quantum information theory for quite some time, defined for states $\rho$ and $\sigma$ as
\cite{OP}
\begin{align*}
D_{\alpha}\old(\rho\|\sigma):=\frac{1}{\alpha-1}\log\Tr\rho^{\alpha}\sigma^{1-\alpha}.
\end{align*}
These divergences also form a monotone increasing family, with the Umegaki relative entropy $D_1(\rho\|\sigma):=\Tr\rho(\log\rho-\log\sigma)$ as their limit at $\alpha\to 1$. They are also strictly positive; however, monotonicity under stochastic (i.e., completely positive and trace-preserving) maps only holds for $\alpha\in[0,2]$.
Recently, a new quantum R\'enyi divergence has been introduced in \cite{Renyi_new,WWY}, defined as
\begin{align*}
D_{\alpha}\nw(\rho\|\sigma):=\frac{1}{\alpha-1}\log\Tr\bz \sigma^{\frac{1-\alpha}{2\alpha}}\rho\sigma^{\frac{1-\alpha}{2\alpha}}\jz^{\alpha}.
\end{align*}
Again, these new divergences yield the Umegaki relative entropy in the limit $\alpha\to 1$, and monotonicity 
only holds on a restricted domain, in this case for $\alpha\in[1/2,+\infty)$.

Operational interpretation has been found for both definitions in the setting of binary hypothesis testing for different and 
matching domains of $\alpha$. The goal in hypothesis testing is to decide between two candidates, $\rho$ and $\sigma$, for the true state of a quantum system, based on a measurement 
on many identical copies of the system. The quantum Stein's lemma \cite{HP,ON} states that it is possible to make the probability of erroneously choosing $\rho$ (type II error)
to vanish exponentially fast in the number of copies, with the exponent being the relative entropy $D_1(\rho\|\sigma)$, while the probability of erroneously choosing $\sigma$
(type I error) goes to zero asymptotically. If the type II error is required to vanish with a suboptimal exponent $r<D_1(\rho\|\sigma)$ (this is called the direct domain) then the type I 
error can also be made to vanish 
exponentially fast, with the optimal exponent being the Hoeffding divergence $H_r:=\sup_{\alpha\in(0,1)}\frac{\alpha-1}{\alpha}[r-D_{\alpha}\old(\rho\|\sigma)]$
\cite{ANSzV,Hayashi,Nagaoka}. Thus, the $D_{\alpha}\old$ with $\alpha\in(0,1)$ quantify the trade-off between the 
rates of the type I and the type II error probabilities in the direct domain. 
Based on this trade-off relation, a 
more direct operational interpretation was obtained in \cite{MH} as generalized cutoff rates in the sense of Csisz\'ar 
\cite{Csiszar}. On the other hand, if the type II error is required to vanish with an exponent $r>D_1(\rho\|\sigma)$ (this is called the strong converse domain) then the type I 
error goes to $1$ 
exponentially fast, with the optimal exponent being the converse Hoeffding divergence $H_r^*:=\sup_{\alpha>1}\frac{\alpha-1}{\alpha}[r-D_{\alpha}\nw(\rho\|\sigma)]$
\cite{MO}. Thus, the $D_{\alpha}\nw$ with $\alpha>1$ quantify the 
trade-off between the rates of the type I success probability and the type II error probability in the strong converse region. Based on this, 
a direct operational interpretation of the $D_{\alpha}\nw$ as generalized cutoff rates was also given in \cite{MO}
for $\alpha>1$.

In the view of the above results, it seems that the old and the new definitions provide the operationally relevant quantum extension of R\'enyi's divergences in different domains: 
for $\alpha\in(0,1)$, the operationally relevant definition seems to be the old one, corresponding to the direct domain of hypothesis testing, whereas for 
$\alpha>1$, the operationally relevant definition seems to be the new one, corresponding to the strong converse domain of hypothesis testing. 

This is the picture at least when one wants to describe the full trade-off curve; most of the time, however, one is interested in one single point of this curve, corresponding to 
$\alpha=1$, where the transition from exponentially vanishing error probability to exponentially vanishing success probability happens.
It is known that using the ``wrong'' divergence can be beneficial to obtaining coding theorems at this point. Indeed, the strong converse property for hypothesis testing and 
classical-quantum channel coding has been proved using $D_{\alpha}\old$ for $\alpha>1$ in \cite{Nagaoka2,ON,ON2}
(``wrong'' divergence with the ``right'' values of $\alpha$), while a proof for the direct part of these problems was obtained recently 
in \cite{BG}, using $D_2\nw$ (```wrong'' divergence with a ``wrong'' value of $\alpha$).

Further examples of coding theorems based on the ``wrong'' R\'enyi divergence were given in \cite{Mosonyi}, where
it was shown that a certain concavity property of the new R\'enyi divergences, which the old ones don't have, make them a very convenient tool to prove
the direct part of various
coding theorems in composite/compound settings. This was demonstrated by giving short and simple proofs for the direct part of Stein's lemma with composite null-hypothesis
and for classical-quantum channel coding with compound channels.
Although the optimal rates for these problems have already been known \cite{BDKSSSz,BB,DD,Notzel}, the proofs in \cite{Mosonyi} are different from the previous ones, and offer considerable simplifications. 
The general approach is the following: 
\begin{enumerate}
\item
We start with a single-shot coding theorem that gives a trade-off relation between the relevant quantities of the problem in terms of R\'enyi divergences. For Stein's lemma, this is 
Audenaert's trace inequality \cite{Aud}, while for channel coding we use the Hayashi-Nagaoka random coding theorem from \cite{HN}.

\item
We then use general properties of the R\'enyi divergences to decouple the upper bounds from multiple to a single null-hypothesis/channel and to derive the asymptotics.
\end{enumerate}
The main advantage of this approach is that the second step only relies on universal properties of the R\'enyi divergences and is largely independent of the concrete problem at hand. 
In particular, the coding theorems for the composite/compound settings can be obtained with the same amount of effort as for a simple null-hypothesis/single channel.

In this paper we present a variant for the proof of Stein's lemma with composite null-hypothesis. While in \cite{Mosonyi} exponential bounds on the error probabilities were given, here we 
study the asymptotics of the optimal type II error probability for a given threshold $\ep$ on the type I error probability. Building on results from \cite{AMV} and \cite{Mosonyi}, we 
derive finite-size bounds on the deviation of the optimal type II error from its asymptotic value. Such bounds are of practical importance, since in real-life scenarios one always works 
with finitely many copies.

The structure of the paper is as follows. Section \ref{sec:not} is a summary of notations. In Section \ref{sec:Renyi} we review some properties of the quantum R\'enyi divergences,
including two inequalities from \cite{Mosonyi}:
Lemma \ref{lemma:old-new bounds}, which gives quantitative bounds between the old and the new definitions of the quantum R\'enyi divergences, and 
Corollary \ref{cor:new renyi superadd}, which shows that the convexity 
of the new R\'enyi divergence in its first argument can be complemented in the form of a weak quasi-concavity inquality. 
For readers' convenience, we include the proof of these inequalities.
In Section \ref{section:Stein} we prove the above mentioned finite-size version of Stein's lemma.

\section{Notations}
\label{sec:not}
   
For a finite-dimensional Hilbert space $\hil$, let $\B(\hil)_+$ denote the set of all 
non-zero positive semidefinite operators on $\hil$, and let 
$\S(\hil):=\{\rho\in\B(\hil)_+\,;\,\Tr\rho=1\}$ be the set of all density operators (states) 
on $\hil$.

We define the powers of a positive semidefinite operator $A$ only on its support; that is, 
if $\lambda_1,\ldots,\lambda_r$ are the strictly positive eigenvalues of $A$, with corresponding 
spectral projections $P_1,\ldots,P_r$, then we define
$A^{\alpha}:=\sum_{i=1}^r \lambda_i^{\alpha}P_i$ for all $\alpha\in\bR$. In particular, 
$A^0=\sum_{i=1}^rP_i$ is the projection onto the support of $A$, and we use $A^0\le B^0$ as a shorthand for $\supp A\subseteq\supp B$.

By a \ki{POVM (positive operator-valued measure)} $T$ on a Hilbert space $\hil$ we mean a map $T:\,\Y\to\B(\hil)$, where $\Y$ is some finite set,
$T(y)\ge 0$ for all $y$, 
and $\sum_{y\in\Y}T(y)=I$. In particular, a binary POVM is a POVM with $\Y=\{0,1\}$.

We denote the natural logarithm by $\log$, and 
use the convention $\log 0:=-\infty$ and $\log +\infty:=+\infty$.

\section{R\'enyi divergences}
\label{sec:Renyi}

For non-zero positive semidefinite operators $\rho,\sigma$, 
the \ki{R\'enyi $\alpha$-divergence} of $\rho$ w.r.t.~$\sigma$ 
with parameter $\alpha\in(0,+\infty)\setminus\{1\}$ is traditionally defined as \cite{OP}
\begin{align*}
\rsro{\rho}{\sigma}{\alpha}&:=
\begin{cases}
\frac{1}{\alpha-1}\log\Tr\rho^{\alpha}\sigma^{1-\alpha}-\frac{1}{\alpha-1}\log\Tr\rho,
& \alpha\in(0,1)\ds\text{or}\ds\rho^0\le\sigma^0,\\
+\infty,&\text{otherwise}.
\end{cases}
\end{align*}
For the mathematical properties of $D_{\alpha}\old$, see, e.g.~\cite{Lieb,MH,Petz}.
Recently, a new notion of R\'enyi divergence has been introduced in \cite{Renyi_new,WWY}, defined as
\begin{align*}
\rsrn{\rho}{\sigma}{\alpha}&:=
\begin{cases}
\frac{1}{\alpha-1}\log\Tr\bz\sigma^{\frac{1-\alpha}{2\alpha}}\rho\sigma^{\frac{1-\alpha}{2\alpha}}\jz^{\alpha}
-\frac{1}{\alpha-1}\log\Tr\rho,
& \alpha\in(0,1)\ds\text{or}\ds\rho^0\le\sigma^0,\\
+\infty,&\text{otherwise}.
\end{cases}
\end{align*}
For the mathematical properties of $D_{\alpha}\nw$, see, e.g.~\cite{Beigi,FL,MO,Renyi_new,WWY}.

An easy calculation shows that for fixed $\rho$ and $\sigma$, the function $\alpha\mapsto\log\Tr\rho^{\alpha}\sigma^{1-\alpha}$ is convex, which in turn yields immediately that 
$\alpha\mapsto \rsro{\rho}{\sigma}{\alpha}$ is monotone increasing. Moreover, the limit at $\alpha=1$ can be easily calculated as
\begin{align}\label{limit at 1}
D_1(\rho\|\sigma):=\lim_{\alpha\to 1}\rsro{\rho}{\sigma}{\alpha}
=
\begin{cases}
\frac{1}{\Tr\rho}\Tr\rho(\log\rho-\log\sigma),&\rho^0\le\sigma^0,\\
+\infty,&\text{otherwise},
\end{cases}
\end{align}
where the latter expression is \ki{Umegaki's relative entropy} \cite{Umegaki}. The same limit relation for $\rsrn{\rho}{\sigma}{\alpha}$ has been shown in 
\cite[Theorem 5]{Renyi_new}.
The following Lemma, due to 
\cite{TCR} and \cite{Tomamichel}, complements the above monotonicity property
around $\alpha=1$, and  in the same time
gives a quantitative version of \eqref{limit at 1}:

\begin{lemma}\label{prop:TCR}
Let $\rho,\sigma\in\B(\hil)_+$ be such that $\rho^0\le \sigma^0$, let
$\kappa:=\log(1+\Tr\rho^{3/2}\sigma^{-1/2}+\Tr\rho^{1/2}\sigma^{1/2})$, let $c>0$, and  
$\delta:=\min\left\{\half, \frac{c}{2\kappa}\right\}$. Then
\begin{align*}
\rsr{\rho}{\sigma}{1}&\ge\rsro{\rho}{\sigma}{\alpha}
\ge
\rsr{\rho}{\sigma}{1}-4(1-\alpha)\kappa^2\cosh c,\ds\ds\ds 1-\delta<\alpha<1,
\end{align*}
and the inequalities hold in the converse direction for $1<\alpha<1+\delta$.
\end{lemma}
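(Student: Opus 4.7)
The upper bound $D_1(\rho\|\sigma)\ge \rsro{\rho}{\sigma}{\alpha}$ for $\alpha<1$ (and its reverse for $\alpha>1$) is nothing but the monotonicity of $\alpha\mapsto\rsro{\rho}{\sigma}{\alpha}$ that was recalled immediately before the statement, so I would concentrate entirely on the quantitative lower bound.

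The plan is to treat $\rsro{\rho}{\sigma}{\alpha}$ as a difference quotient. Set
\begin{equation*}
h(\alpha):=\log\Tr\rho^{\alpha}\sigma^{1-\alpha}-\log\Tr\rho,\qquad Q(\alpha):=\Tr\rho^{\alpha}\sigma^{1-\alpha},
\end{equation*}
so that $\rsro{\rho}{\sigma}{\alpha}=h(\alpha)/(\alpha-1)$, $h(1)=0$, and an easy differentiation under the trace (valid because $\rho^0\le\sigma^0$) gives $h'(1)=D_1(\rho\|\sigma)$. The function $h$ is also convex, because $\alpha\mapsto Q(\alpha)$ is log-convex (a classical consequence of H\"older/Lieb-type inequalities). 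Taylor's theorem with integral remainder then yields
\begin{equation*}
\rsro{\rho}{\sigma}{\alpha}-\rsr{\rho}{\sigma}{1}\;=\;\frac{1}{\alpha-1}\int_{1}^{\alpha}(\alpha-t)\,h''(t)\,dt,
\end{equation*}
so that $|D_1-\rsro{\rho}{\sigma}{\alpha}|\le \tfrac{|1-\alpha|}{2}\sup_{t}h''(t)$, with supremum taken over $t$ between $\alpha$ and $1$. Everything therefore reduces to a uniform upper bound on $h''$ on the interval $[1-\delta,1+\delta]$.

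To bound $h''(\alpha)=Q''(\alpha)/Q(\alpha)-(Q'(\alpha)/Q(\alpha))^{2}$ I would expand $Q',Q''$ via the identities $\frac{d}{d\alpha}\rho^{\alpha}=(\log\rho)\rho^{\alpha}$ and $\frac{d}{d\alpha}\sigma^{1-\alpha}=-(\log\sigma)\sigma^{1-\alpha}$, obtaining traces of the form $\Tr[(\log\rho)^{a}\rho^{\alpha}(\log\sigma)^{b}\sigma^{1-\alpha}]$ with $a+b\le 2$. To absorb the unbounded logarithms into $\kappa$ I would use the scalar estimate
\begin{equation*}
|y|^{k}e^{y}\;\le\;k!\,\cosh(c)\,\bigl(e^{y/2}+e^{-y/2}\bigr)\qquad(k\le 2,\;|y|\le c),
\end{equation*}
applied operator-theoretically after simultaneously diagonalising $\rho$ (and, by a Lieb--Thirring or H\"older reduction, also moving the $\sigma$-factors around). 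Setting $y=(\alpha-1)\log\rho$ and $y=(\alpha-1)\log\sigma$ and using $|\alpha-1|\kappa\le c/2$ converts the quadratic-logarithm traces into sums of $\Tr\rho^{1/2}\sigma^{1/2}$ and $\Tr\rho^{3/2}\sigma^{-1/2}$, which by definition are bounded by $e^{\kappa}$; together with $\log(\cdot)\le$\,$(\cdot)$ this produces the overall factor $\kappa^{2}\cosh c$. Combining with the integral bound gives $|D_1-\rsro{\rho}{\sigma}{\alpha}|\le 4(1-\alpha)\kappa^{2}\cosh c$; the $\alpha>1$ side is completely parallel with the roles of the two endpoints swapped.

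The main obstacle is the penultimate step: turning the scalar exponential bound into an operator inequality whose trace is controlled by the two very specific moments $\Tr\rho^{3/2}\sigma^{-1/2}$ and $\Tr\rho^{1/2}\sigma^{1/2}$ that build $\kappa$. Because $\rho$ and $\sigma$ do not commute, handling the cross terms $\Tr[(\log\rho)\rho^{\alpha}(\log\sigma)\sigma^{1-\alpha}]$ requires a careful use of Cauchy--Schwarz on the Hilbert--Schmidt inner product (or a Lieb--Thirring trace inequality) to split them into a product of two factors each of which reduces, after the scalar estimate above, to one of the two listed moments. Once this reduction is in place, the rest is bookkeeping.
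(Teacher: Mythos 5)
The paper does not prove Lemma~\ref{prop:TCR}; it quotes it from Tomamichel--Colbeck--Renner and from Tomamichel's thesis without reproducing an argument, so there is no in-paper proof to compare against and I am judging your sketch on its own. Your architecture is sound and matches the cited sources: monotonicity of $\alpha\mapsto\rsro{\rho}{\sigma}{\alpha}$ for the first inequality, and for the second a Taylor expansion of $h(\alpha)=\log\Tr\rho^\alpha\sigma^{1-\alpha}-\log\Tr\rho$ at $\alpha=1$ (using $h(1)=0$, $h'(1)=D_1$, $h$ convex), reducing the claim to the uniform bound $h''\le 8\kappa^2\cosh c$ on the relevant neighbourhood of $1$.

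The quantitative step, however, has two genuine gaps. First, the scalar inequality you invoke, $|y|^k e^y\le k!\,\cosh(c)\bigl(e^{y/2}+e^{-y/2}\bigr)$ for $|y|\le c$ and $k\le 2$, is false once $c$ is not small: at $y=c=3$, $k=2$, the left side is $9e^3\approx 181$ while the right side is $2\cosh(3)\cdot 2\cosh(3/2)\approx 95$, and the failure already sets in near $c=2$. Since $c>0$ is a free parameter of the lemma, this cannot be patched by restricting $c$. Second, the way you propose to apply it misreads $\kappa$: setting $y=(\alpha-1)\log\rho$ (or $\log\sigma$) and concluding $|y|\le c$ from $|\alpha-1|\kappa\le c/2$ presupposes that $\kappa$ bounds $\norm{\log\rho}$ and $\norm{\log\sigma}$, which it does not --- $\kappa$ is the logarithm of two specific moments and can be small even when those operator norms are huge. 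As a result, the non-commuting cross term $\Tr[(\log\rho)\rho^\alpha(\log\sigma)\sigma^{1-\alpha}]$ that you yourself flag as ``the main obstacle'' is not actually dispatched by ``simultaneously diagonalising $\rho$'' plus Cauchy--Schwarz; that step is only gestured at. A clean way to remove the non-commutativity, and to make $\kappa$ appear for the right reason, is the Nussbaum--Szko\l a reduction: with $\rho=\sum_x\lambda_x\pr{x}$ and $\sigma=\sum_y\mu_y\pr{y}$, set $p(x,y):=\lambda_x|\langle x|y\rangle|^2$ and $q(x,y):=\mu_y|\langle x|y\rangle|^2$; then $\Tr\rho^\alpha\sigma^{1-\alpha}=\sum_{x,y}p(x,y)^\alpha q(x,y)^{1-\alpha}$ for every $\alpha$, so $h$ is the classical cumulant generating function of the log-likelihood ratio $R=\log(p/q)$ under $p$, $h''$ is the variance of $R$ under the $(\alpha-1)$-tilted measure, and $\kappa=\log\bigl(1+\Exp_p[e^{R/2}]+\Exp_p[e^{-R/2}]\bigr)$. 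The actual content of the lemma --- still missing from your sketch --- is then to bound that tilted variance by $8\kappa^2\cosh c$ for $|\alpha-1|\le\min\{1/2,\,c/(2\kappa)\}$ using these $\pm 1/2$-exponential moments; it is a moment estimate, not a pointwise truncation of $R$ (which is unbounded).
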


\begin{rem}\label{rem:kappa}
Assume that $\rho$ and $\sigma$ are states.
The function $f(\alpha):=\Tr\rho^{\alpha}\sigma^{1-\alpha}$ is convex in $\alpha$, and $\rho^0\le\sigma^0$ implies that 
$f(1)=1$. Hence, $\alpha\mapsto (f(\alpha)-1)/(\alpha-1)$ is monotone increasing. Comparing the values at $1/2$ and $3/2$, we see 
that $\Tr\rho^{3/2}\sigma^{-1/2}+\Tr\rho^{1/2}\sigma^{1/2}\ge 2$, and thus $\kappa>1$.
\end{rem}

\begin{rem}
The \ki{R\'enyi entropy} of a positive semidefinite operator $\rho\in\B(\hil)_+$ with parameter $\alpha\in(0,+\infty)$ is defined as
\begin{align*}
S_{\alpha}(\rho):=-D_{\alpha}\old(\rho\|I)=-D_{\alpha}\nw(\rho\|I)=\frac{1}{1-\alpha}\log\Tr\rho^{\alpha}-\frac{1}{1-\alpha}\log\Tr\rho.
\end{align*}
By the above considerations, $\alpha\mapsto S_{\alpha}(\rho)$ is monotone decreasing, and comparing its values at 
$\alpha$ and at $0$, we get
\begin{align}\label{power bound}
\Tr\rho^{\alpha}\le(\Tr\rho^0)^{(1-\alpha)}(\Tr\rho)^{\alpha},\ds\ds\ds\alpha\in(0,1).
\end{align}
\end{rem}
\smallskip

According to the Araki-Lieb-Thirring inequality \cite{Araki,LT}, for any positive semidefinite operators 
$A,B$, $\Tr A^{\alpha}B^{\alpha}A^{\alpha}\le\Tr (ABA)^{\alpha}$
for $\alpha\in(0,1)$, and the inequality holds in the converse direction for
$\alpha>1$. A converse to the Araki-Lieb-Thirring inequality was  given in 
\cite{Aud-ALT}, where it was shown that 
$\Tr (ABA)^{\alpha}\le \bz\norm{B}^{\alpha}\Tr A^{2\alpha}\jz^{1-\alpha}
\bz\Tr A^{\alpha}B^{\alpha}A^{\alpha}\jz^{\alpha}$
for $\alpha\in(0,1)$, and the inequality holds in the converse direction for $\alpha>1$.
Applying these inequalities to $A:=\rho^{\half}$ and 
$B:=\sigma^{\frac{1-\alpha}{\alpha}}$, we get
\begin{align}\label{old-new bounds0}
\Tr\rho^{\alpha}\sigma^{1-\alpha}
\le
\Tr\bz\rho^{\half}\sigma^{\frac{1-\alpha}{\alpha}}\rho^{\half}\jz^{\alpha}
\le
\norm{\sigma}^{(1-\alpha)^2}\bz\Tr\rho^{\alpha}\jz^{1-\alpha}\bz\Tr\rho^{\alpha}\sigma^{1-\alpha}\jz^{\alpha}
\end{align}
for $\alpha\in(0,1)$, and the inequalities hold in the converse direction for $\alpha>1$.
In terms of the R\'enyi divergences, the above inequalities yield the ones in the following Lemma, the first of which 
has already been pointed out in \cite{WWY} and \cite{DL}.

\begin{lemma}\label{lemma:old-new bounds}
Let $\rho,\sigma\in\S(\hil)$ be states. For any $\alpha\in(0,+\infty)$,
\begin{align}\label{old-new bounds}
\rsro{\rho}{\sigma}{\alpha}
\ge
\rsrn{\rho}{\sigma}{\alpha}
\ge
\alpha\rsro{\rho}{\sigma}{\alpha}-|\alpha-1|\log\dim\hil.
\end{align}
\end{lemma}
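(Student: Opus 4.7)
The plan is to read off the two inequalities directly from \eqref{old-new bounds0}, after first noting the basic identity that turns the middle quantity of \eqref{old-new bounds0} into the quantity appearing in $\rsrn{\rho}{\sigma}{\alpha}$. Setting $X:=\rho^{1/2}\sigma^{(1-\alpha)/(2\alpha)}$, we have $XX^{*}=\rho^{1/2}\sigma^{(1-\alpha)/\alpha}\rho^{1/2}$ and $X^{*}X=\sigma^{(1-\alpha)/(2\alpha)}\rho\,\sigma^{(1-\alpha)/(2\alpha)}$, so $\Tr\bz\rho^{1/2}\sigma^{(1-\alpha)/\alpha}\rho^{1/2}\jz^{\alpha}=\Tr\bz\sigma^{(1-\alpha)/(2\alpha)}\rho\,\sigma^{(1-\alpha)/(2\alpha)}\jz^{\alpha}$ for every $\alpha>0$, since $XX^{*}$ and $X^{*}X$ have the same non-zero eigenvalues. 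Thus the middle term in \eqref{old-new bounds0} is exactly $e^{(\alpha-1)\rsrn{\rho}{\sigma}{\alpha}}$ (recall $\Tr\rho=1$).

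For the first inequality $\rsro{\rho}{\sigma}{\alpha}\ge\rsrn{\rho}{\sigma}{\alpha}$, I would just take $\log$ of the first inequality in \eqref{old-new bounds0} and divide by $\alpha-1$, keeping in mind that the direction of \eqref{old-new bounds0} reverses at $\alpha=1$ but so does the sign of $\alpha-1$, so the conclusion is the same in both ranges of $\alpha$. The case $\rho^{0}\not\le\sigma^{0}$ with $\alpha>1$ is trivial since both sides are $+\infty$, and $\alpha=1$ holds by the continuous-limit formula \eqref{limit at 1}.

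For the second inequality, I would plug two size-bounds into the upper estimate in \eqref{old-new bounds0}. Since $\sigma$ is a state, $\norm{\sigma}\le 1$ for $\alpha\in(0,1)$ gives $\norm{\sigma}^{(1-\alpha)^{2}}\le 1$, and from \eqref{power bound} applied to the state $\rho$ we have $\Tr\rho^{\alpha}\le(\dim\hil)^{1-\alpha}$, hence $(\Tr\rho^{\alpha})^{1-\alpha}\le(\dim\hil)^{(1-\alpha)^{2}}$. Substituting and taking $\log$ yields
\begin{align*}
\log\Tr\bz\rho^{1/2}\sigma^{(1-\alpha)/\alpha}\rho^{1/2}\jz^{\alpha}\le(1-\alpha)^{2}\log\dim\hil+\alpha\log\Tr\rho^{\alpha}\sigma^{1-\alpha},
\end{align*}
and dividing by $\alpha-1<0$ gives exactly $\rsrn{\rho}{\sigma}{\alpha}\ge\alpha\rsro{\rho}{\sigma}{\alpha}-(1-\alpha)\log\dim\hil$. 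For $\alpha>1$ the inequalities in \eqref{old-new bounds0} reverse, so I would use the dual bounds $\norm{\sigma}\ge 1/\dim\hil$ (since $\sigma$ is a state) and $\Tr\rho^{\alpha}\le 1$ (so $(\Tr\rho^{\alpha})^{1-\alpha}\ge 1$ because $1-\alpha<0$) to get $\norm{\sigma}^{(1-\alpha)^{2}}(\Tr\rho^{\alpha})^{1-\alpha}\ge(\dim\hil)^{-(\alpha-1)^{2}}$, leading to the analogous bound with $|\alpha-1|=\alpha-1$ after dividing by the now-positive $\alpha-1$.

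There is no real obstacle here; the only thing to handle carefully is the bookkeeping of signs when passing through $\alpha=1$, and the fact that the $\norm{\sigma}$ and $\Tr\rho^{\alpha}$ factors must be bounded from above in the $\alpha<1$ range but from below in the $\alpha>1$ range, both of which are controlled by $\dim\hil$ via the state properties $\norm{\sigma}\in[1/\dim\hil,1]$ and $\Tr\rho^{\alpha}\in[(\dim\hil)^{1-\alpha},1]$ in the respective regimes. The $\alpha=1$ case gives equality trivially.
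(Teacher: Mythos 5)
Your proof is correct and follows essentially the same route as the paper: read both bounds off \eqref{old-new bounds0}, then use $\norm{\sigma}\le 1$ together with \eqref{power bound} for $\alpha\in(0,1)$, and $\norm{\sigma}\ge 1/\dim\hil$ for $\alpha>1$. Your handling of the $\alpha>1$ branch is a small streamlining of the paper's argument (the paper bounds $(\Tr\rho^{\alpha})^{1-\alpha}$ via $\Tr(\rho/\norm{\rho})^{\alpha}\le\Tr(\rho/\norm{\rho})$ followed by $\norm{\rho}\le 1$, whereas you note directly that $\Tr\rho^{\alpha}\le 1$ hence $(\Tr\rho^{\alpha})^{1-\alpha}\ge 1$), and you make explicit the $XX^{*}$ versus $X^{*}X$ spectral identity that the paper uses implicitly to pass between $\rho^{1/2}\sigma^{(1-\alpha)/\alpha}\rho^{1/2}$ and $\sigma^{(1-\alpha)/(2\alpha)}\rho\,\sigma^{(1-\alpha)/(2\alpha)}$.
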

\begin{proof}
The first inequality is immediate from the first inequality in \eqref{old-new bounds0}. Taking into account \eqref{power bound}, and that $\norm{\sigma}\le 1$, the second inequality in 
\eqref{old-new bounds0} yields the second inequality in \eqref{old-new bounds} for $\alpha\in(0,1)$. For $\alpha>1$, we have 
$\Tr(\rho/\norm{\rho})^{\alpha}\le\Tr(\rho/\norm{\rho}$, and hence we get 
$\Tr\bz\rho^{\half}\sigma^{\frac{1-\alpha}{\alpha}}\rho^{\half}\jz^{\alpha}
\ge
\norm{\sigma}^{(1-\alpha)^2}\norm{\rho}^{-(\alpha-1)^2}\bz\Tr\rho^{\alpha}\sigma^{1-\alpha}\jz^{\alpha}$.
Using that $\norm{\rho}\le 1$ and that $\norm{\sigma}\ge 1/\dim\hil$, we get the second inequality in 
 \eqref{old-new bounds} for $\alpha>1$.
\end{proof}
\medskip

For $\rho,\sigma\in\B(\hil)_+$, let
\begin{equation}\label{new Q}
Q\old_{\alpha}(\rho\|\sigma):=\Tr\rho^{\alpha}\sigma^{1-\alpha},\ds\ds\ds\ds\ds\ds
Q\nw_{\alpha}(\rho\|\sigma):=
\Tr\bz\sigma^{\frac{1-\alpha}{2\alpha}}\rho\sigma^{\frac{1-\alpha}{2\alpha}}\jz^{\alpha}
\end{equation}
be the core quantities of the R\'enyi divergences $D_{\alpha}\old$ and $D_{\alpha}\nw$, respectively.
$Q_{\alpha}\old$ is jointly concave in $(\rho,\sigma)$ for $\alpha\in[0,1]$ (see \cite{Lieb,Petz}) and jointly convex for $\alpha\in[1,2]$ (see \cite{Ando,Petz}).
The general concavity result in \cite[Theorem 2.1]{Hiai} implies as a special case that 
$Q\nw_{\alpha}(\rho\|\sigma)$ is jointly concave in $(\rho,\sigma)$ for 
$\alpha\in[1/2,1)$. (See also \cite{FL} for a different proof of this). 
In \cite{Renyi_new,WWY}, joint convexity of $Q\nw_{\alpha}$ was shown for $\alpha\in[1,2]$, which was later extended in \cite{FL}, using a different proof method, to all $\alpha>1$.
These results are equivalent to the monotonicity of the R\'enyi divergences under completely positive trace-preserving maps, for 
$\alpha\in[0,2]$ in the case of $D_{\alpha}\old$, and for $\alpha\ge 1/2$ in the case of $D_{\alpha}\nw$.

The next lemma shows that the concavity of $Q_{\alpha}\nw$ in its first argument can be complemented by a subadditivity inequality for $\alpha\in(0,1)$:
\begin{lemma}\label{lemma:concavity complement}
Let $\rho_1,\ldots,\rho_r\in\S(\hil)$ be states and $\sigma\in\B(\hil)_+$, and let 
$\gamma_1,\ldots,\gamma_r$ be a probability distribution. For every $\alpha\in(0,1)$, 
\begin{align}
\sum_i\gamma_iQ\nw_{\alpha}(\rho_i\|\sigma)&\le Q\nw_{\alpha}\bigg(\sum_i\gamma_i\rho_i\Big\|\sigma\bigg)\le \sum_i\gamma_i^{\alpha}Q\nw_{\alpha}(\rho_i\|\sigma).\label{concavity complement}
\end{align}
\end{lemma}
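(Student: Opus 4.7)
The plan is to reduce both bounds to standard properties of the functional $A\mapsto\Tr A^{\alpha}$ on $\B(\hil)_+$ for $\alpha\in(0,1)$. First I would introduce the positive semidefinite operators $A_i:=\sigma^{\frac{1-\alpha}{2\alpha}}\rho_i\sigma^{\frac{1-\alpha}{2\alpha}}$; since the conjugation $\rho\mapsto\sigma^{\frac{1-\alpha}{2\alpha}}\rho\sigma^{\frac{1-\alpha}{2\alpha}}$ is linear, the two inequalities in \eqref{concavity complement} become
\begin{align*}
\sum_i\gamma_i\Tr A_i^{\alpha}\ds\le\ds\Tr\bz\sum_i\gamma_iA_i\jz^{\alpha}\ds\le\ds\sum_i\gamma_i^{\alpha}\Tr A_i^{\alpha}.
\end{align*}

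The left-hand inequality then just expresses the (classical) concavity of $A\mapsto\Tr A^{\alpha}$ on $\B(\hil)_+$ for $\alpha\in(0,1)$, which follows e.g.\ from the L\"owner--Heinz operator concavity of $x\mapsto x^{\alpha}$ on $[0,+\infty)$ in this range of $\alpha$. This step is essentially immediate and requires no new work.

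For the right-hand inequality I would set $B_i:=\gamma_iA_i\ge 0$, so that $\sum_i\gamma_iA_i=\sum_iB_i$ and $\gamma_i^{\alpha}\Tr A_i^{\alpha}=\Tr B_i^{\alpha}$; the desired bound is then equivalent to the trace subadditivity $\Tr(\sum_iB_i)^{\alpha}\le\sum_i\Tr B_i^{\alpha}$ for $\alpha\in(0,1)$, which is the main obstacle. My plan is to derive it from Rotfel'd's trace inequality, or more concretely from the weak submajorization $\lambda^{\downarrow}(\sum_iB_i)\prec_w\lambda^{\downarrow}(\bigoplus_iB_i)$ of the spectra (a form of Ky Fan's eigenvalue inequality), combined with the facts that $x\mapsto x^{\alpha}$ is non-negative, concave on $[0,+\infty)$, and vanishes at $0$. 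By the Hardy--Littlewood--P\'olya rearrangement principle these two ingredients yield $\Tr(\sum_iB_i)^{\alpha}\le\Tr(\bigoplus_iB_i)^{\alpha}=\sum_i\Tr B_i^{\alpha}$, which closes the argument.
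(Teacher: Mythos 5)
Your reduction is the same as the paper's: conjugate by $\sigma^{\frac{1-\alpha}{2\alpha}}$, handle the left inequality via operator concavity of $x\mapsto x^{\alpha}$, and reduce the right inequality to the trace subadditivity $\Tr\bigl(\sum_iB_i\bigr)^{\alpha}\le\sum_i\Tr B_i^{\alpha}$ by absorbing $\gamma_i$ into $B_i:=\gamma_iA_i$. The paper then \emph{proves} this special case of Rotfel'd's inequality from scratch, writing $\Tr(A+B)^{\alpha}-\Tr A^{\alpha}=\int_0^1\alpha\Tr B(A+tB)^{\alpha-1}\,dt$ and bounding the integrand by $\alpha\Tr B(tB)^{\alpha-1}$ using operator monotone decrease of $x\mapsto x^{\alpha-1}$. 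Simply citing Rotfel'd, as you also suggest, would of course be acceptable.

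However, the ``more concrete'' submajorization argument you sketch as an alternative is wrong, in two places. First, the claimed relation $\lambda^{\downarrow}\bigl(\sum_iB_i\bigr)\prec_w\lambda^{\downarrow}\bigl(\bigoplus_iB_i\bigr)$ is false: take $B_1=B_2=I_2$, so $\lambda^{\downarrow}(B_1+B_2)=(2,2)$ while $\lambda^{\downarrow}(B_1\oplus B_2)=(1,1,1,1)$, and already the $k=1$ partial sum fails ($2\not\le 1$). Ky Fan's inequality bounds $\sum_{j\le k}\lambda_j^{\downarrow}\bigl(\sum_iB_i\bigr)$ by $\sum_i\sum_{j\le k}\lambda_j^{\downarrow}(B_i)$, which is in general strictly larger than $\sum_{j\le k}\lambda_j^{\downarrow}\bigl(\bigoplus_iB_i\bigr)$, so it does not yield your relation. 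Second, even granting some submajorization, the Hardy--Littlewood--P\'olya machinery does not run in the direction you use it: for a \emph{concave} $f$, weak submajorization $x\prec_w y$ does not imply $\sum f(x_i)\le\sum f(y_i)$ (take $x=(2,2)$, $y=(5,0)$, $f=\sqrt{\cdot}$). The correct majorization route is the \emph{opposite} one, $\lambda^{\downarrow}\bigl(\bigoplus_iB_i\bigr)\prec\lambda^{\downarrow}\bigl(\sum_iB_i\bigr)\oplus(0,\dots,0)$ (a genuine majorization, traces equal), combined with Schur-\emph{concavity} of $x\mapsto\sum_jf(x_j)$ for concave $f$ with $f(0)=0$; this gives $\sum_i\Tr f(B_i)\ge\Tr f\bigl(\sum_iB_i\bigr)$. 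So either invoke Rotfel'd as a black box, reproduce the paper's integral proof, or fix both the direction of the majorization and the direction of the Schur inequality.
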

\begin{proof}
The function $x\mapsto x^{\alpha}$ is operator concave on $[0,+\infty)$ for $\alpha\in(0,1)$ (see Theorems V.1.9 and V.2.5 in  \cite{Bhatia}), from which the first inequality in \eqref{concavity complement} follows immediately. To prove the second inequality, we use a special case of the Rotfel'd inequality, for which we provide a proof below.
First let $A,B\in\B(\hil)_+$ be invertible. Then 
\begin{align}
\Tr(A+B)^{\alpha}-\Tr A^{\alpha}&=
\int_{0}^1\frac{d}{dt}\Tr(A+tB)^{\alpha}\,dt
=
\int_{0}^1\alpha\Tr B(A+tB)^{\alpha-1}\,dt\nonumber\\
&\le
\int_{0}^1\alpha\Tr B(tB)^{\alpha-1}\,dt
=\Tr B^{\alpha}\int_{0}^1\alpha t^{\alpha-1}\,dt
=\Tr B^{\alpha},\label{subadditivity proof}
\end{align}
where in the first line we used
the identity $(d/dt)\Tr f(A+tB)=\Tr Bf'(A+tB)$, and the inequality follows
from the fact that $x\mapsto x^{\alpha-1}$ is operator monotone decreasing on $(0,+\infty)$ for 
$\alpha\in(0,1)$. By continuity, we can drop the invertibility assumption, and
\eqref{subadditivity proof} holds for any $A,B\in\B(\hil)_+$. Obviously, \eqref{subadditivity proof} extends to more than two operators, i.e., $\Tr(A_1+\ldots+A_r)^{\alpha}\le \Tr A_1^{\alpha}+\ldots+\Tr A_r^{\alpha}$ for any 
$A_,\ldots, A_r\in\B(\hil)_+$ and $\alpha\in(0,1)$. Choosing now $A_i:=\sigma^{\frac{1-\alpha}{2\alpha}}\gamma_i\rho_i\sigma^{\frac{1-\alpha}
{2\alpha}}$ yields the second inequality in \eqref{concavity complement}.
\end{proof}

\begin{corollary}\label{cor:new renyi superadd}
Let $\rho_1,\ldots,\rho_r\in\S(\hil)$ be states  and $\sigma\in\B(\hil)_+$, and let $\gamma_1,\ldots,\gamma_r$ be a probability 
distribution. 
For every $\alpha\in(0,1)$,
\begin{align*}
\min_i\rsrn{\rho_i}{\sigma}{\alpha}+\log \min_i\gamma_i
\le 
D\nw_{\alpha}\bigg(\sum_i\gamma_i\rho_i\Big\|\sigma\bigg)
\le\sum_i\gamma_i\rsrn{\rho_i}{\sigma}{\alpha}.
\end{align*} 
\end{corollary}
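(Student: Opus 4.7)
The plan is to derive both inequalities from Lemma~\ref{lemma:concavity complement} by taking $\frac{1}{\alpha-1}\log$ on both sides, keeping careful track of the fact that the prefactor $\frac{1}{\alpha-1}$ is \emph{negative} for $\alpha\in(0,1)$, so every inequality on $Q_\alpha\nw$ flips direction when converted to an inequality on $D_\alpha\nw$. Since each $\rho_i$ is a state and $\sum_i\gamma_i\rho_i$ is then also a state, the $-\frac{1}{\alpha-1}\log\Tr\rho$ normalizer vanishes throughout, so we are effectively working with $D_\alpha\nw = \frac{1}{\alpha-1}\log Q_\alpha\nw$.

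For the upper bound, I would start from the first inequality of \eqref{concavity complement},
\[
\sum_i \gamma_i Q_\alpha\nw(\rho_i\|\sigma) \;\le\; Q_\alpha\nw\bigg(\sum_i \gamma_i\rho_i\Big\|\sigma\bigg),
\]
apply $\frac{1}{\alpha-1}\log(\cdot)$ (which reverses the inequality), and then invoke concavity of the logarithm in the form $\log\sum_i\gamma_i x_i\ge\sum_i\gamma_i\log x_i$, which becomes $\le$ after multiplication by the negative factor $\frac{1}{\alpha-1}$. The result is exactly $D_\alpha\nw\bigl(\sum_i\gamma_i\rho_i\big\|\sigma\bigr)\le\sum_i\gamma_i D_\alpha\nw(\rho_i\|\sigma)$.

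For the lower bound I would start from the second inequality of \eqref{concavity complement},
\[
Q_\alpha\nw\bigg(\sum_i\gamma_i\rho_i\Big\|\sigma\bigg) \;\le\; \sum_i\gamma_i^\alpha Q_\alpha\nw(\rho_i\|\sigma),
\]
and then apply the elementary estimate $\gamma_i^\alpha = \gamma_i\cdot\gamma_i^{\alpha-1}\le \gamma_i\,(\min_j\gamma_j)^{\alpha-1}$, which holds because $\alpha-1<0$ makes $x\mapsto x^{\alpha-1}$ decreasing on $(0,1]$. Combined with $\sum_i\gamma_i=1$ and $\sum_i\gamma_i Q_\alpha\nw(\rho_i\|\sigma)\le \max_i Q_\alpha\nw(\rho_i\|\sigma)$, this gives
\[
Q_\alpha\nw\bigg(\sum_i\gamma_i\rho_i\Big\|\sigma\bigg)\;\le\;(\min_j\gamma_j)^{\alpha-1}\,\max_i Q_\alpha\nw(\rho_i\|\sigma).
\]
Applying $\frac{1}{\alpha-1}\log(\cdot)$ flips the inequality and turns the prefactor $(\min_j\gamma_j)^{\alpha-1}$ into an additive $\log\min_j\gamma_j$, while $\max_i Q_\alpha\nw(\rho_i\|\sigma)$ becomes $\min_i D_\alpha\nw(\rho_i\|\sigma)$ (again because the map $Q_\alpha\nw\mapsto D_\alpha\nw$ is order-reversing for $\alpha<1$).

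The only step that is not completely automatic is choosing the right elementary bound on $\gamma_i^\alpha$ to convert the subadditivity estimate into a statement involving $\min_j\gamma_j$; the rest is bookkeeping of sign flips. No further properties of the operators are needed beyond Lemma~\ref{lemma:concavity complement}, so the proof should be a few lines long.
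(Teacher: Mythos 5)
Your proposal is correct and is exactly the argument the paper leaves implicit (the paper's proof of this corollary is simply ``Immediate from Lemma~\ref{lemma:concavity complement}''). The sign bookkeeping for $\frac{1}{\alpha-1}<0$, the use of $\log$-concavity for the upper bound, and the estimate $\gamma_i^\alpha\le\gamma_i(\min_j\gamma_j)^{\alpha-1}$ together with $\sum_i\gamma_i Q_\alpha\nw(\rho_i\|\sigma)\le\max_i Q_\alpha\nw(\rho_i\|\sigma)$ for the lower bound are precisely the steps needed.
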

\begin{proof}
Immediate from Lemma \ref{lemma:concavity complement}.
\end{proof}

\section{Stein's lemma with composite null-hypothesis}
\label{section:Stein}

In the general formulation of binary quantum hypothesis testing, we assume that for every $n\in\bN$, 
a quantum system with Hilbert space $\hil_n$ is given, together with
two subsets 
$H_{0,n}$ and $H_{1,n}$ of the state space of $\hil_n$, corresponding to the \ki{null-hypothesis} and 
the \ki{alternative hypothesis}, respectively. Our aim is to guess, based on a binary POVM, which set 
the true state of the system falls into. Here we consider the i.i.d.~case with composite null-hypothesis and simple alternative 
hypothesis. That is, for every $n\in\bN$, $\hil_n=\hil^{\otimes n}$ for some finite-dimensional Hilbert space $\hil$; the 
null-hypothesis is represented by a set of states
$\N\subseteq\S(\hil)$, and the alternative hypothesis is represented by a single state $\sigma\in\S(\hil)$.
For every $n\in\bN$, we have $H_{0,n}=\N^{(\otimes n)}:=\{\rho^{\otimes n}:\,\rho\in\N\}$, and 
$H_{1,n}=\{\sigma^{\otimes n}\}$. 

Given a binary POVM $T_n=(T_n(0),T_n(1))$, with $T_n(0)$ corresponding to accepting the null-hpothesis and $T_n(1)$ to accepting 
the alternative hypothesis, there are two possible ways of making an erroneous decision: accepting the alternative hypothesis when the null-hypothesis is true, called the type I error, or the other way around, called the type II error. The probabilities of these two errors are given by
\begin{align*}
\alpha_n(T_n):=\sup_{\rho\in\N}\Tr\rho^{\otimes n}T_n(1),\ds\text{(type I)\ds\ds\ds and\ds\ds\ds}
\beta_n(T_n):=\Tr\sigma^{\otimes n}T_n(0),\ds\text{(type II)}.
\end{align*}
Note that in the definition of $\alpha_n$, we used a worst-case error probability.

In the setting of Stein's lemma, one's aim is to keep the type I error below a threshold $\ep$, and to optimize the type II error under this condition. For any set $\M\subseteq\S(\hil^{\otimes n})$ and any $\ep\in(0,1)$, let 
\begin{align*}
\beta_{\ep}(\M\|\sigma^{\otimes n}):=\inf\left\{\Tr\sigma^{\otimes n}T_n(0):\,\sup_{\omega\in\M}\Tr\omega T_n(1)\le\ep\right\},
\end{align*} 
where the infimum is taken over all binary POVM $T_n$ on $\hil^{\otimes n}$. When $\M$ consists of one single element $\omega$, we simply write $\beta_{\ep}(\omega\|\sigma^{\otimes n})$. The quantum Stein's lemma states that 
\begin{align}\label{Stein's lemma}
\lim_{n\to+\infty}\frac{1}{n}\log\beta_{\ep}\bz \N^{(\otimes n)}\|\sigma^{\otimes n}\jz=
-D_1(\N\|\sigma):=-\inf_{\rho\in\N}D_1(\rho\|\sigma).
\end{align}
This has been shown first in \cite{HP,ON2} for the case where $\N$ consists of one single element $\rho$.
Theorem 2 in \cite{Hayashi_Stein} uses group representation techniques to give an approximation of the relative entropy in terms 
of post-measurement relative entropies, which, when combined with Stein's lemma for probability distributions, yields 
\eqref{Stein's lemma} for finite $\N$. A direct proof for the case of infinite $\N$, also based on group representation theory, 
has recently been given in \cite{Notzel}. A version of Stein's lemma with infinite $\N$ has been previously 
proved in \cite{BDKSSSz}, however, with a weaker error criterion.

Here we give a different proof of the quantum Stein's lemma with possibly infinite composite null-hypothesis. Our proof is based 
on the results of \cite{AMV}, where bounds on $\beta_\ep$ were obtained in terms of R\'enyi divergences, and general properties of the R\'enyi divergences from Section \ref{sec:Renyi}. Moreover, we give a refined version of 
\eqref{Stein's lemma} in Theorem \ref{thm:Stein} by providing finite-size corrections to the deviation of 
$\frac{1}{n}\log\beta_{\ep}\bz \N^{(\otimes n)}\|\sigma^{\otimes n}\jz$ from its asymptotic value $-D_1(\N\|\sigma)$
for every $n\in\bN$.

We will need the following results from \cite{AMV}:

\begin{lemma}\label{lemma:AMV}
Let $\rho,\sigma\in\S(\hil)$. For every $\ep\in(0,1)$ and every $\alpha\in(0,1)$,
\begin{align}\label{beta upper3}
\log\beta_{\ep}(\rho\|\sigma)\le -D_{\alpha}\old(\rho\|\sigma)+\frac{\alpha}{1-\alpha}\log\ep\inv-\frac{h_2(\alpha)}{1-\alpha},
\end{align}
where $h_2(\alpha):=-\alpha\log\alpha-(1-\alpha)\log(1-\alpha)$ is the binary entropy function. Moreover, for every $n\in\bN$,
\begin{align}\label{beta lower2}
\frac{1}{n}\log\beta_{\ep}\bz\rho^{\otimes n}\|\sigma^{\otimes n}\jz\ge -D_1(\rho\|\sigma)-\frac{1}{\sqrt{n}}4\sqrt{2}\kappa\log(1-\ep)\inv,
\end{align}
where $\kappa$ is given in Lemma \ref{prop:TCR}.
\end{lemma}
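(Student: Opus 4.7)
The two inequalities \eqref{beta upper3} and \eqref{beta lower2} call for essentially disjoint techniques. For the single-shot upper bound I would follow the standard Neyman-Pearson approach. Audenaert's trace inequality \cite{Aud} furnishes, for every $t>0$ and every $s\in[0,1]$, a binary POVM $(T(0),T(1))$ with $\Tr\rho T(1)+t\,\Tr\sigma T(0)\le t^{1-s}\Tr\rho^{s}\sigma^{1-s}$. Treating the two summands separately and choosing $t$ so that $\Tr\rho T(1)=\ep$ gives $\log\beta_\ep(\rho\|\sigma)\le -\rsro{\rho}{\sigma}{s}+\frac{s}{1-s}\log\ep\inv$; setting $\alpha:=s$ this already matches \eqref{beta upper3} up to the refinement $-h_2(\alpha)/(1-\alpha)$. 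That last term is produced by sharpening the Neyman-Pearson construction: instead of a bare spectral projection of $\rho-t\sigma$ one convexly combines two adjacent spectral projections so that the type-I constraint is saturated exactly, and $h_2(\alpha)$ is precisely the binary entropy of the combination weights. The resulting calculation is elementary bookkeeping.

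For the asymptotic lower bound \eqref{beta lower2} the plan is to pass through a $\rsro{\cdot}{\cdot}{\alpha}$ converse with $\alpha>1$, expand around $\alpha=1$ via Lemma \ref{prop:TCR}, and optimize in the free parameters. Concretely, given a binary POVM $T$ on $\hil^{\otimes n}$, let $\ep_0:=\Tr\rho^{\otimes n}T(1)\le\ep$ and $\beta:=\Tr\sigma^{\otimes n}T(0)$, and denote by $p=(1-\ep_0,\ep_0)$ and $q=(\beta,1-\beta)$ the induced post-measurement distributions. Data processing of $D_{\alpha}\old$ for $\alpha\in(1,2]$ gives $n\rsro{\rho}{\sigma}{\alpha}=\rsro{\rho^{\otimes n}}{\sigma^{\otimes n}}{\alpha}\ge \rsro{p}{q}{\alpha}$, and dropping the positive summand $\ep_0^{\alpha}(1-\beta)^{1-\alpha}$ inside the logarithm yields the classical two-point lower bound $\rsro{p}{q}{\alpha}\ge -\log\beta-\frac{\alpha}{\alpha-1}\log(1-\ep_0)\inv$. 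Chaining and slackening $\ep_0\le\ep$ gives $\log\beta\ge -n\rsro{\rho}{\sigma}{\alpha}-\frac{\alpha}{\alpha-1}\log(1-\ep)\inv$.

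Now apply Lemma \ref{prop:TCR} with a free parameter $c>0$ for $1<\alpha<1+\min\{\half,c/(2\kappa)\}$, giving $\rsro{\rho}{\sigma}{\alpha}\le \rsr{\rho}{\sigma}{1}+4(\alpha-1)\kappa^2\cosh c$. Writing $s:=\alpha-1$, this converts the previous display into
\[
\tfrac{1}{n}\log\beta\ \ge\ -\rsr{\rho}{\sigma}{1}\ -\ 4s\kappa^2\cosh c\ -\ \tfrac{1+s}{ns}\log(1-\ep)\inv,
\]
and it remains to optimize over $s,c>0$. Naively minimizing only in $s$ at fixed $c$ by AM-GM gives a penalty of order $\sqrt{\log(1-\ep)\inv/n}$, weaker than the claim; the trick is to let $c$ itself depend on $\ep$. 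The concrete choice $\cosh c = 2\log(1-\ep)\inv$ together with $s=(2\sqrt{2n}\kappa)\inv$ balances the two penalty terms, and after noting that $(1+s)/s=1/s+1$ contributes only an $O(\log(1-\ep)\inv/n)$ correction, the total collapses to exactly $(4\sqrt{2}\kappa/\sqrt{n})\log(1-\ep)\inv$.

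The main obstacle in this plan is the last step of bookkeeping. The clean constant $4\sqrt{2}$ only emerges with this joint tuning of $s$ and $c$, and one must also check that the admissibility conditions $s\le\half$ and $s\le c/(2\kappa)$ from Lemma \ref{prop:TCR} are met under the proposed choice. This works for $\ep$ not too small (where $\log(1-\ep)\inv\ge\half$ makes $\cosh c=2\log(1-\ep)\inv\ge 1$ admissible); the small-$\ep$ regime is easier because there the stated linear-in-$\log(1-\ep)\inv$ penalty is already looser than what the pure AM-GM bound with a constant $c$ provides, so it can be handled by a separate, weaker choice of $c$.
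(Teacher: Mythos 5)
The paper's own ``proof'' of this lemma is a one-line citation: \eqref{beta upper3} is Proposition~3.2 of \cite{AMV}, and \eqref{beta lower2} is formula~(19) of Theorem~3.3 there. Your proposal reconstructs the bounds from scratch, and the broad architecture you describe (Audenaert's trace inequality for the achievability part; data processing of $D_\alpha\old$ followed by Lemma~\ref{prop:TCR} for the converse) is indeed the one behind the cited results. However, two of your explanations do not hold up.

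For the upper bound, your account of where $-h_2(\alpha)/(1-\alpha)$ comes from is incorrect: $h_2(\alpha)$ is \emph{not} the binary entropy of the randomization weights used to saturate the type-I constraint. The correct mechanism is a free optimization over the Lagrange parameter $t$. Writing $Q:=\Tr\rho^\alpha\sigma^{1-\alpha}$ and $e_t:=\min_T\bz\Tr\rho T(1)+t\,\Tr\sigma T(0)\jz$, one has the convex-duality identity $\beta_\ep(\rho\|\sigma)=\max_{t>0}(e_t-\ep)/t$, and Audenaert's inequality $e_t\le t^{1-\alpha}Q$ gives $\beta_\ep\le\max_{t>0}(t^{1-\alpha}Q-\ep)/t$. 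The maximum is attained at $t^{1-\alpha}=\ep/(\alpha Q)$, and substituting yields $\log\beta_\ep\le -D_\alpha\old(\rho\|\sigma)+\frac{\alpha}{1-\alpha}\log\ep\inv+\log(1-\alpha)+\frac{\alpha}{1-\alpha}\log\alpha$, which is exactly \eqref{beta upper3}. The randomization over adjacent spectral projections serves only to make the duality an equality; the weights there depend on $\ep$ and the spectrum of $\rho-t\sigma$, not on $\alpha$.

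For the lower bound, the bookkeeping you sketch does not close. With $s=\alpha-1=(2\sqrt{2n}\kappa)\inv$ and $\cosh c=2\log(1-\ep)\inv$, the two balanced terms sum to $(4\sqrt{2}\kappa/\sqrt n)\log(1-\ep)\inv$ as you note, but the ``$+1$'' in $(1+s)/s$ contributes an additional $n^{-1}\log(1-\ep)\inv$ that is not present in \eqref{beta lower2}; absorbing it requires shaving $\cosh c$ below $2\log(1-\ep)\inv$, which is only possible when $\log(1-\ep)\inv$ is not too small. More importantly, your treatment of the small-$\ep$ regime is backwards. For small $\ep$, $\log(1-\ep)\inv\approx\ep$, so the claimed penalty in \eqref{beta lower2} scales as $\ep/\sqrt n$, while the AM-GM optimization at fixed $c$ gives a penalty scaling as $\sqrt{\ep}/\sqrt n$. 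Since $\sqrt\ep\gg\ep$, the fixed-$c$ bound is \emph{weaker}, not stronger, so it does not imply \eqref{beta lower2}; the small-$\ep$ regime is in fact the more delicate one and requires an argument you have not supplied (nor checked against what \cite{AMV} actually do to obtain their formula~(19)).
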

\begin{proof}
The upper bound \eqref{beta upper3} is due to \cite[Proposition 3.2]{AMV}, while the lower bound in \eqref{beta lower2} is 
formula (19) in \cite[Theorem 3.3]{AMV}.
\end{proof}

When $\N$ is infinite, we will need the following approximation lemma, which is a special case of 
\cite[Lemma 2.6]{MS}:
\begin{lemma}\label{lemma:approximation}
For every $\delta>0$, let $\N_{\delta}\subset\N$ be a set of minimal cardinality such that 
$\sup_{\rho\in\N}\inf_{\rho'\in\N_{\delta}}\norm{\rho-\rho'}_1\le \delta$. 
Then $|\N_{\delta}|\le \min\{|\N|,(1+2\delta\inv)^D\}$, where $D=(\dim\hil+1)(\dim\hil)/2$, and
\begin{align}\label{Stein approximation}
\sup_{\rho\in\N}\inf_{\rho'\in\N_{\delta}}\norm{\rho^{\otimes n}-(\rho')^{\otimes n}}_1
\le 
n
\sup_{\rho\in\N}\inf_{\rho'\in\N_{\delta}}\norm{\rho-\rho'}_1
\le
n\delta,\ds\ds\ds n\in\bN.
\end{align}
\end{lemma}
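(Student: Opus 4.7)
The plan is to handle the two assertions of the lemma separately, since they rest on unrelated techniques.

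For the tensor-product bound in \eqref{Stein approximation} I would use the standard telescoping identity
\begin{align*}
\rho^{\otimes n}-(\rho')^{\otimes n}=\sum_{k=1}^{n}\rho^{\otimes(k-1)}\otimes(\rho-\rho')\otimes(\rho')^{\otimes(n-k)},
\end{align*}
apply the triangle inequality for the trace norm, and use multiplicativity of the trace norm under tensor products together with $\norm{\rho}_1=\norm{\rho'}_1=1$ to obtain $\norm{\rho^{\otimes n}-(\rho')^{\otimes n}}_1\le n\norm{\rho-\rho'}_1$. Taking $\inf_{\rho'\in\N_{\delta}}$ and then $\sup_{\rho\in\N}$, and inserting the defining property of $\N_\delta$, yields both inequalities in \eqref{Stein approximation}.

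For the cardinality bound, the estimate $|\N_\delta|\le|\N|$ is immediate from $\N_\delta\subset\N$. For the other half I would use a volume/packing argument in the finite-dimensional real affine space spanned by self-adjoint trace-one operators on $\hil$, endowed with the trace norm. Concretely, let $\N_\delta'\subset\N$ be a maximal $\delta$-separated subset; by maximality $\N_\delta'$ is automatically a $\delta$-net of $\N$, so the minimal net $\N_\delta$ satisfies $|\N_\delta|\le|\N_\delta'|$. The open trace-norm balls of radius $\delta/2$ centred at the points of $\N_\delta'$ are pairwise disjoint and lie inside a fixed $(1+\delta/2)$-enlargement of $\S(\hil)$, and a Lebesgue volume comparison in dimension $D$ then yields $|\N_\delta'|\le(1+2\delta\inv)^D$.

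The only delicate point is bookkeeping: one must track the correct real dimension $D$ of the affine hull of $\S(\hil)$ and the correct radius of the enclosing trace-norm ball so that the exponent and base constant match the stated bound exactly. Since this is carried out in detail in \cite[Lemma~2.6]{MS} and the present statement is an unaltered specialization to $\N\subset\S(\hil)$, I would ultimately cite that result for the covering-number estimate and supply only the telescoping argument explicitly in the paper.
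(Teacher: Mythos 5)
Your proposal is correct, and it actually supplies more detail than the paper itself, which gives no explicit proof but simply presents the lemma as a special case of [MS, Lemma~2.6]. The telescoping identity $\rho^{\otimes n}-(\rho')^{\otimes n}=\sum_{k=1}^{n}\rho^{\otimes(k-1)}\otimes(\rho-\rho')\otimes(\rho')^{\otimes(n-k)}$, together with multiplicativity of the trace norm over tensor factors and $\norm{\rho}_1=\norm{\rho'}_1=1$, does give $\norm{\rho^{\otimes n}-(\rho')^{\otimes n}}_1\le n\norm{\rho-\rho'}_1$, and taking $\inf_{\rho'\in\N_\delta}$ and then $\sup_{\rho\in\N}$ is the right order of quantifiers; your packing/volume sketch for the covering-number estimate is the standard argument underlying the cited result, so your proof matches the route the paper implicitly takes.
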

\smallskip

Now we are ready to prove our main result:
\begin{theorem}\label{thm:Stein}
Let $\ep\in(0,1)$, and for 
every $n\in\bN$, let $0\le\delta_n\le\ep/(2n)$.
Then
\begin{align}
\frac{1}{n}\log\beta_{\ep}\bz \N^{(\otimes n)}\|\sigma^{\otimes n}\jz\le
&- D_1(\N\|\sigma)\nonumber\\
&+\sqrt{\frac{\log\bz 2|\N_{\delta_n}|\ep\inv\jz}{n}}\cdot
2\left[8\kappa_{\max}^2+\log\dim\hil+D_1(\N\|\sigma)\right]^{\half}\nonumber\\
&+\frac{\log\bz 2|\N_{\delta_n}|\ep\inv\jz}{n}\cdot 4\kappa_{\max},\label{beta upper2}\\
\frac{1}{n}\log\beta_{\ep}\bz \N^{(\otimes n)}\|\sigma^{\otimes n}\jz\ge
&- D_1(\N\|\sigma)-\frac{1}{\sqrt{n}}4\sqrt{2}\log(1-\ep)\inv\kappa_{\max},\label{beta lower}
\end{align}
where $\kappa_{\max}:=\sup_{\rho\in\N}\{\log(1+\Tr\rho^{3/2}\sigma^{-1/2}+\Tr\rho^{1/2}\sigma^{1/2})\}
\le \log(2+\Tr\sigma^{-1/2})<+\infty$.

In \eqref{beta upper2}, the slowest decaying term after $- D_1(\N\|\sigma)$ is of the order $1/\sqrt{n}$ when $\N$ is finite, and when $\N$ is infinite, it can be chosen to be of the order $\sqrt{\frac{\log n}{n}}$.
\end{theorem}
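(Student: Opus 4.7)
The lower bound \eqref{beta lower} is immediate: for any $\rho\in\N$, the inclusion $\N^{(\otimes n)}\supseteq\{\rho^{\otimes n}\}$ gives $\beta_\ep(\N^{(\otimes n)}\|\sigma^{\otimes n})\ge\beta_\ep(\rho^{\otimes n}\|\sigma^{\otimes n})$, so the per-copy bound \eqref{beta lower2} of Lemma~\ref{lemma:AMV} applies. Taking the supremum along an infimizing sequence in $D_1(\cdot\|\sigma)$ and using $\kappa(\rho)\le\kappa_{\max}$ then yields \eqref{beta lower}; the explicit bound $\kappa_{\max}\le\log(2+\Tr\sigma^{-1/2})$ quoted in the theorem is a straightforward H\"older estimate.

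For the upper bound I reduce the composite, possibly infinite, null hypothesis $\N^{(\otimes n)}$ to a single (non-i.i.d.) mixture state in two steps. Lemma~\ref{lemma:approximation} gives a finite $\delta_n$-net $\N_{\delta_n}\subseteq\N$ with $n\delta_n\le\ep/2$; any binary test whose type I error is at most $\ep/2$ on each $(\rho')^{\otimes n}$, $\rho'\in\N_{\delta_n}$, automatically has type I error at most $\ep$ on the whole $\N^{(\otimes n)}$, so $\beta_\ep(\N^{(\otimes n)}\|\sigma^{\otimes n})\le\beta_{\ep/2}(\N_{\delta_n}^{(\otimes n)}\|\sigma^{\otimes n})$. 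A union bound applied to the uniform mixture $\bar\rho_n:=|\N_{\delta_n}|^{-1}\sum_{\rho'\in\N_{\delta_n}}(\rho')^{\otimes n}$ then gives $\beta_{\ep/2}(\N_{\delta_n}^{(\otimes n)}\|\sigma^{\otimes n})\le\beta_{\ep/(2|\N_{\delta_n}|)}(\bar\rho_n\|\sigma^{\otimes n})$.

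I now apply \eqref{beta upper3} at some $\alpha\in(0,1)$ close to $1$ and lower bound $D_\alpha\old(\bar\rho_n\|\sigma^{\otimes n})$ by a per-copy quantity via the following chain of Section~\ref{sec:Renyi} inequalities: the first inequality of Lemma~\ref{lemma:old-new bounds} gives $D_\alpha\old(\bar\rho_n\|\sigma^{\otimes n})\ge D_\alpha\nw(\bar\rho_n\|\sigma^{\otimes n})$; Corollary~\ref{cor:new renyi superadd} applied to the uniform distribution on $\N_{\delta_n}$ lower bounds this by $\min_{\rho'\in\N_{\delta_n}}D_\alpha\nw((\rho')^{\otimes n}\|\sigma^{\otimes n})-\log|\N_{\delta_n}|$; additivity of $D_\alpha\nw$ on tensor products turns the minimum into $n\min_{\rho'}D_\alpha\nw(\rho'\|\sigma)$; and finally the second inequality of Lemma~\ref{lemma:old-new bounds} combined with Lemma~\ref{prop:TCR} bounds each single-copy term below by $\alpha D_1(\rho'\|\sigma)-4\alpha(1-\alpha)\kappa_{\max}^2\cosh c-(1-\alpha)\log\dim\hil$, noting that $\min_{\rho'\in\N_{\delta_n}}D_1(\rho'\|\sigma)\ge D_1(\N\|\sigma)$ since $\N_{\delta_n}\subseteq\N$. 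Choosing $c$ so that $\cosh c\le 2$, substituting into \eqref{beta upper3} and dividing by $n$ leaves the estimate $-D_1(\N\|\sigma)+(1-\alpha)A+\frac{1}{n(1-\alpha)}\log(2|\N_{\delta_n}|/\ep)$, where $A:=D_1(\N\|\sigma)+8\kappa_{\max}^2+\log\dim\hil$.

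The AM--GM minimization of $(1-\alpha)A+B/(1-\alpha)$ over $1-\alpha\in(0,c/(2\kappa_{\max})]$ then yields \eqref{beta upper2}: the unconstrained optimum $1-\alpha=\sqrt{B/A}$ produces the $\sqrt{\log(2|\N_{\delta_n}|/\ep)/n}\cdot 2\sqrt A$ term, while the regime in which the Lemma~\ref{prop:TCR} constraint saturates contributes the linear residual $(\log(2|\N_{\delta_n}|/\ep)/n)\cdot 4\kappa_{\max}$. The final asymptotic statement follows from Lemma~\ref{lemma:approximation}: taking $\delta_n=\ep/(2n)$ gives $|\N_{\delta_n}|\le(1+4n/\ep)^D$, so $\log|\N_{\delta_n}|=O(\log n)$ when $\N$ is infinite (dominant deviation $O(\sqrt{\log n/n})$), while $\log|\N_{\delta_n}|\le\log|\N|$ is uniformly bounded when $\N$ is finite (dominant deviation $O(1/\sqrt n)$). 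The main obstacle is the chain of paragraph~3: the ``old'' divergence is what \eqref{beta upper3} controls, but it lacks the weak quasi-concavity of Corollary~\ref{cor:new renyi superadd} needed to decouple $\bar\rho_n$ into per-copy pieces, so one must route through $D_\alpha\nw$ and then convert back at a multiplicative cost of $\alpha$ and an additive cost of $(1-\alpha)\log\dim\hil$, all while keeping Lemma~\ref{prop:TCR}'s expansion compatible with its domain constraint $1-\alpha<c/(2\kappa_{\max})$; the simultaneous bookkeeping of these corrections is precisely what produces the coexisting $\sqrt{\cdot/n}$ and $\cdot/n$ terms in \eqref{beta upper2}.
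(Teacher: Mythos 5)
Your proof is correct and follows the same chain of reductions as the paper: approximate $\N$ by a finite net, pass to the uniform mixture via a union bound, apply the AMV single-shot bound \eqref{beta upper3}, switch from $D_\alpha\old$ to $D_\alpha\nw$, decouple the mixture via Corollary~\ref{cor:new renyi superadd}, use additivity, then return to $D_1$ via Lemma~\ref{lemma:old-new bounds} and the TCR expansion, and finally optimize over $\alpha$. The only (inconsequential) difference is in the handling of the TCR domain constraint: the paper chooses $c$ adaptively with $\cosh c = 2 + \tfrac{1}{n}\log(2|\N_{\delta_n}|\ep^{-1})$ so that the unconstrained AM--GM optimizer is automatically feasible and then invokes subadditivity of the square root, whereas you fix $c$ with $1\le c$, $\cosh c\le 2$ and split according to whether the AM--GM optimum violates the constraint $1-\alpha < c/(2\kappa_{\max})$; both routes produce exactly the two-term bound \eqref{beta upper2}.
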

\begin{proof}
The lower bound in \eqref{beta lower} is immediate from \eqref{beta lower2}, and hence we only have to prove \eqref{beta upper2}.
We have
\begin{align*}
\log\beta_{\ep}\bz \N^{(\otimes n)}\|\sigma^{\otimes n}\jz
&\le
\log\beta_{\ep-n\delta_n}\bz \N_{\delta_n}^{(\otimes n)}\|\sigma^{\otimes n}\jz
\le
\log\beta_{\frac{\ep-n\delta_n}{|\N_{\delta_n}|}}\bz \sum_{\rho\in\N_{\delta_n}}\frac{1}{|\N_{\delta_n}|}\rho^{\otimes n}\bigg\|\sigma^{\otimes n}\jz\\
&\le
-D_{\alpha}\old\bz\sum_{\rho\in\N_{\delta_n}}\frac{1}{|\N_{\delta_n}|}\rho^{\otimes n}\bigg\|\sigma^{\otimes n}\jz
+\frac{\alpha}{1-\alpha}\log\frac{|\N_{\delta_n}|}{\ep-n\delta_n}\\
&\le
-D_{\alpha}\nw\bz\sum_{\rho\in\N_{\delta_n}}\frac{1}{|\N_{\delta_n}|}\rho^{\otimes n}\bigg\|\sigma^{\otimes n}\jz
+\frac{\alpha}{1-\alpha}\log\frac{|\N_{\delta_n}|}{\ep-n\delta_n},
\end{align*}
where the first inequality is due to \eqref{Stein approximation}, the second inequality is obvious, the third one follows from \eqref{beta upper3},
and the last one is due to Lemma \ref{lemma:old-new bounds}.
Note that $\ep-n\delta_n\ge \ep/2$ by assumption.
Using Corollary \ref{cor:new renyi superadd}, we can continue the above upper bound as
\begin{align*}
&\log\beta_{\ep}\bz \N^{(\otimes n)}\|\sigma^{\otimes n}\jz\\
&\ds\le
-\min_{\rho\in\N_{\delta_n}}D_{\alpha}\nw\bz\rho^{\otimes n}\|\sigma^{\otimes n}\jz+\log|\N_{\delta_n}|+
+\frac{\alpha}{1-\alpha}\log|\N_{\delta_n}|+\frac{\alpha}{1-\alpha}\log\frac{2}{\ep}\\
&\ds\le
-n\inf_{\rho\in\N}D_{\alpha}\nw\bz\rho\|\sigma\jz
+\frac{1}{1-\alpha}\log|\N_{\delta_n}|+\frac{1}{1-\alpha}\log\frac{2}{\ep},
\end{align*}
where in the last line we used the additivity property $D_{\alpha}\nw\bz\rho^{\otimes n}\|\sigma^{\otimes n}\jz=nD_{\alpha}\nw\bz\rho\|\sigma\jz$.

By Lemmas \ref{lemma:old-new bounds} and \ref{prop:TCR}, for every $\alpha\in(1/2,1)$ such that $\alpha>1-\frac{c}{2\kappa_{\max}}$,
\begin{align*}
\inf_{\rho\in\N}\rsrn{\rho}{\sigma}{\alpha}
&\ge
\alpha\inf_{\rho\in\N}\rsro{\rho}{\sigma}{\alpha}-(1-\alpha)\log\dim\hil\\
&\ge
\alpha\inf_{\rho\in\N}\rsr{\rho}{\sigma}{1}-4\alpha(1-\alpha)\kappa_{\max}^2\cosh c-(1-\alpha)\log\dim\hil,
\end{align*}
where $c$ is an arbitrary positive constant. Now choose $\alpha:=1-a/\sqrt{n}$. Then 
\begin{align*}
\frac{1}{n}\log\beta_{\ep}\bz \N^{(\otimes n)}\|\sigma^{\otimes n}\jz\le&
-\bz 1-\frac{a}{\sqrt{n}}\jz D_1(\N\|\sigma)+\frac{a}{\sqrt{n}}\bz 4\kappa_{\max}^2\cosh c+\log\dim\hil\jz\\
&+\frac{1}{a\sqrt{n}}\bz \log|\N_{\delta_n}|+\log\frac{2}{\ep}\jz.
\end{align*}
Optimizing over $a$ yields
\begin{align}
&\frac{1}{n}\log\beta_{\ep}\bz \N^{(\otimes n)}\|\sigma^{\otimes n}\jz\nonumber\\
&\ds\ds\le - D_1(\N\|\sigma)
+\frac{2}{\sqrt{n}}
\left[4\kappa_{\max}^2\cosh c+\log\dim\hil+D_1(\N\|\sigma)\right]^{\half}
\cdot\left[\log(2|\N_{\delta_n}|\ep\inv)\right]^{\half}.\label{beta upper}
\end{align}
The optimum is reached at 
\begin{align*}
a^*=\left[\log(2|\N_{\delta_n}|\ep\inv)\right]^{\half}\cdot 
\left[4\kappa_{\max}^2\cosh c+\log\dim\hil+D_1(\N\|\sigma)\right]^{-\half},
\end{align*}
and we need 
$a^*/\sqrt{n}\le 1/2$ and $a^*/\sqrt{n}\le c/(2\kappa_{\max})$, which is satisfied if 
\begin{align*}
\kappa_{\max}^2\cosh c\ge\frac{1}{n}\log(2|\N_{\delta_n}|\ep\inv)\ds\ds\ds\text{and}\ds\ds\ds
c^2\cosh c\ge\frac{1}{n}\log(2|\N_{\delta_n}|\ep\inv).
\end{align*}
Let us choose $c>0$ such that $\cosh c=2+\frac{1}{n}\log(2|\N_{\delta_n}|\ep\inv)$.
By Remark \ref{rem:kappa}, $\kappa_{\max}>1$, and hence the first inequality is satisfied. Moreover,
with this choice $c>1$, and thus the second inequality is satisfied as well.

Substituting this choice of $c$ into \eqref{beta upper}, and using the subadditivity of the square root, we get 
\eqref{beta upper2}.

When $\N$ is finite, we can choose $\delta_n=0$, and hence $\N_{\delta_n}=\N$, for every $n$. This shows that the second term in \eqref{beta upper2} is of the order 
$1/\sqrt{n}$, while the third term is of the order $1/n$. When $\N$ is infinite, we can choose $\delta_n=\ep/(2n^2)$, whence the order 
of the second term in \eqref{beta upper2} is $\sqrt{\frac{\log n}{n}}$, and the order of the third term is 
$\frac{\log n}{n}$.
\end{proof}

\begin{rem}
In the case of a simple null-hypothesis $\N=\{\rho\}$, the limit 
\begin{align}\label{second order}
\lim_{n\to+\infty}\sqrt{n}\bz\frac{1}{n}\log\beta_{\ep}(\N^{(\otimes n)}\|\sigma^{\otimes n})+D_1(\N\|\sigma)\jz,
\end{align} 
called the second-order asymptotics,
has been determined in 
\cite{KeLi,TH}. Their results show that the finite-size bounds of \cite{AMV} are not asymptotically optimal, and hence the same holds for the bounds in 
Theorem \ref{thm:Stein}. The merit of these latter results, on the other hand, is that the correction terms are easily computable, and the bounds are valid for any finite $n$. 
To the best of our knowledge, the value of the limit \eqref{second order} has not yet been determined when $|\N|>1$, and our bounds in 
Theorem \ref{thm:Stein} give bounds on the second-order asymptotics in this case.
\end{rem}

\subparagraph*{Acknowledgements}

The author is grateful to Professor Fumio Hiai and Nilanjana Datta for discussions.

\end{document}